\newcommand{\AutoAdjust}[3]{\mathchoice{ \left #1 #2  \right #3}{#1 #2 #3}{#1 #2 #3}{#1 #2 #3} }
\newcommand{\Xcomment}[1]{{}}
\newcommand{\InBrackets}[1]{\AutoAdjust{[}{#1}{]}}
\newcommand{\Ex}[2][]{\operatorname{\mathbf E}_{#1}\InBrackets{#2}}
\newcommand{\Prx}[2][]{\operatorname{\mathbf{Pr}}_{#1}\InBrackets{#2}}
\newcommand{\eqdef}{\stackrel{\textnormal{def}}{=}}
\newcommand{\noaccents}[1]{#1}
\newcommand{\newagentvar}[3][\noaccents]{%
\expandafter\newcommand\expandafter{\csname #2\endcsname}{#1{#3}}%
\expandafter\newcommand\expandafter{\csname #2s\endcsname}{#1{\boldsymbol{#3}}}%
\expandafter\newcommand\expandafter{\csname #2smi\endcsname}[1][i]{#1{\boldsymbol{#3}}_{\text{-}##1}}%
\expandafter\newcommand\expandafter{\csname #2i\endcsname}[1][i]{#1{#3}_{##1}}%
\expandafter\newcommand\expandafter{\csname #2ith\endcsname}[1][i]{#1{#3}_{(##1)}}%
}
\DeclareMathOperator{\Conv}{Conv}
\DeclareMathOperator{\rank}{rank}
\newcommand{\matroid}{\mathcal M}
\newcommand{\feasible}{\mathcal{F}}
\newcommand{\polytope}{\mathcal{P}}
\newcommand{\indicator}{\mathbbm{1}}
\newcommand\Hongxun[1]{}
\newcommand\hufu[1]{}
\newcommand\Abner[1]{}
\newcommand\del[1]{}
\theoremstyle{plain}
\newtheorem{theorem}{Theorem}[section]
\newtheorem{lemma}[theorem]{Lemma}
\newtheorem{definition}[theorem]{Definition}
\newcommand*\samethanks[1][\value{footnote}]{\footnotemark[#1]}
\title{Oblivious Online Contention Resolution Schemes}
\author{Hu Fu \thanks{ITCS, Shanghai University of Finance and Economics. {\texttt{\{fuhu, lu.pinyan, tang.zhihao\}@mail.shufe.edu.cn}}}
\and Pinyan Lu \samethanks[1]
\and Zhihao Gavin Tang\samethanks[1]
\and Abner Turkieltaub\thanks{University of British Columbia. {\texttt{abner7@cs.ubc.ca}}}
\and Hongxun Wu\thanks{IIIS, Tsinghua University. {\texttt{\{wuhx18, zqf18\}@mails.tsinghua.edu.cn}}}
\and Jinzhao Wu\thanks{CFCS, Computer Science Dept., Peking University. {\texttt{jinzhao.wu@pku.edu.cn}}}
\and Qianfan Zhang\samethanks[3]
}
\begin{document}
\maketitle


    \begin{abstract}
  Contention resolution schemes (CRSs) are powerful tools for obtaining ``ex post feasible'' solutions from candidates that are drawn from ``ex ante feasible'' distributions.
  Online contention resolution schemes (OCRSs), the online version, have found myriad applications in Bayesian and stochastic problems, such as prophet inequalities and stochastic probing.
  
 
When the ex ante distribution is unknown, it was unknown whether good CRSs/OCRSs exist with no sample (in which case the scheme is \emph{oblivious}) or few samples from the distribution. 
In this work, we give a simple $\frac{1}{e}$-selectable oblivious single item OCRS by mixing two simple schemes evenly, and show, via a Ramsey theory argument, that it is optimal.
On the negative side, we show that no CRS or OCRS with $O(1)$ samples can be $\Omega(1)$-balanced/selectable (i.e., preserve every active candidate with a constant probability) for graphic or transversal matroids.


  

\end{abstract}

\section{Introduction}
\label{sec:intro}
Contention Resolution Schemes (CRSs) were introduced by Chekuri et al.~\cite{CVZ14} as a tool for rounding fractional solutions in submodular function maximization.
These schemes allow one to first optimize, under feasibility constraints, a continuous extension of a discrete function and then round the fractional solution to an integral feasible solution.
Feldman et al.~\cite{FSZ16} extended this framework to online settings, and the resulting Online Contention Resolution Schemes (OCRSs) turn out to be powerful tools for a wide range of applications in Bayesian and stochastic online optimization, such as prophet inequalities \cite{KW19, EFGT20}, stochastic probing \cite{GN13, GNS16, FTWWZ21}, and posted pricing mechanisms \cite{CHMS10}.

\hufu{Should we be more specific here?  I spent a long time here and decided to give up for now..}

More concretely, given a feasibility system, an ex ante feasible solution is a distribution $\allocs$ over feasible sets.  
According to such a distribution, the elements are included in the solution in a correlated manner.  
In many problems, however, the elements are selectable, or \emph{active}, only \emph{independently} according to the marginal distributions given by~$\allocs$.  
A contention resolution scheme is a procedure indexed by the distribution~$\allocs$ that, given the set of active elements, 
must select an (ex post) feasible subset of active elements, and aims to guarantee that each element, when active, is selected with a constant probability~$c$.
For many applications, this guarantees to retain at least a $c$~fraction of the objective, compared with that of the (unattainable) ex ante solution.
An online contention resolution scheme sees each element's status (of being active or not) in an online fashion, and must decide whether to select an element upon its arrival.




In many Bayesian or stochastic problems, it is interesting to study whether good algorithms exist in the absence of the distributional knowledge, and, when sample access is allowed to make up for this lack of knowledge, how the performance of best algorithms scales with the number of samples.  
For example, in revenue optimal mechanism design, mechanisms with no knowledge of or only sample access to the type distributions are known as prior-free or prior-independent mechanisms, and has seen a large literature devoted to them \cite[e.g.][]{GHKSW06, HR08, DRY15, CR14, GW21}.  
As another example, in the single-item prophet inequality problem, a \emph{single} sample suffices for an algorithm to match the performance of an optimal one equipped with full knowledge of the distribution \cite{RWW20}. 
We investigate in this work similar problems for CRS and OCRS when the feasibility system is given by a matroid.
In particular, we give a simple, provably optimal, algorithm for single-item OCRS with no distributional knowledge ($\allocs$), and we show that no good CRS or OCRS with $O(1)$ samples exist for graphic or transversal matroids. 

We now introduce more formalism in order to discuss our discoveries.




\subsection{Oblivious CRS and OCRS}
\label{sec:intro-CRS}


A CRS/OCRS is defined with respect to of a universe~$U$ and a downward-closed feasibility system $\feasible \subseteq 2^U$.  
An ex ante feasible solution is a distribution over sets in~$\feasible$, represented by a vector $\allocs$ in the polytope associated with~$\feasible$: $\allocs \in \polytope_{\feasible } \coloneqq \Conv(\{1_S\}_{S \in \feasible})$, where $\Conv(\cdot)$ is the operation of taking the convex hull, and $1_S$ is the indicator vector for a set~$S$. 
A convex decomposition of~$\allocs$ using the vertices of~$\polytope_{\feasible}$ gives rise to such a distribution,  and $\alloci$ is the marginal probability with which element~$i$ is included when one draws from this distribution.  
In many applications, either one is prevented from sampling from this distribution, or such a sample does not serve as a good solution; rather, one observes that each element~$i$ is selectable, or \emph{active}, \emph{independently}, with probability~$\alloci$.
A good CRS maps the set of active elements to a feasible subset, guaranteeing that each element, when active, is kept with a constant probability.


\begin{definition}[\cite{CVZ14}]
  \label{def:CRS}
  A \emph{Contention Resolution Scheme} (CRS) $\pi$ for~$\feasible$ is a procedure indexed by ex ante feasible solutions $\allocs \in \polytope_\feasible$. 
  For every $\allocs \in \polytope_{\feasible}$, for each $S \subseteq U$, $\pi_{\allocs}(\cdot)$ returns a random set $\pi_{\allocs}(S)$ such that, with probability~1, $\pi_{\allocs}(S) \in \feasible$ and $\pi_{\allocs}(S) \subseteq S$.  
  A CRS $\pi$ is \emph{$c$-balanced}, for $c \in [0, 1]$, if for each $\allocs \in \polytope_{\feasible}$, for each $i \in U$, $i \in \pi_{\allocs}(R(\allocs))$ with probability at least $c \alloci$, where $R(\allocs)$ is a random subset of~$U$ such that each element $j$ is in~$R(\allocs)$ independently with probability~$\alloci[j]$.
  
  A CRS $\pi$ is said to be \emph{deterministic} if $\pi_{\allocs}(\cdot)$ is deterministic for each $\allocs$. 
  It is \emph{oblivious} if, for every $S \subseteq U$, 
  the distribution of $\pi_{\allocs}(S)$ and the distribution of $\pi_{\mathbf y}(S)$ are identical for any two $\allocs, \mathbf y \in \polytope_{\feasible}$.
  \footnote{In~\cite{CVZ14}, Chekuri et al.\@ additionally require an oblivious CRS to be deterministic.  
  We remove this requirement. 
  This distinction is important for all the results in this work.}
\end{definition}

The elements in the random set~$R(\allocs)$ are said to be \emph{active}.


\begin{definition}[\cite{FSZ16}]
\label{def:OCRS}
An \emph{Online Contention Resolution Scheme} (OCRS) with respect to~$\feasible$ is a procedure indexed by ex ante feasible solutions $\allocs \in \polytope_{\feasible}$ and an arrival order of elements in~$U$.  
The elements arrive according to the order, and is revealed whether it is \emph{active} and, if so, the OCRS must decide irrevocably whether to accept it in the output~$T$. 
An OCRS is \emph{$c$-selectable}, for $c \in [0, 1]$, if for any $\allocs \in \polytope_{\matroid}$ and any arrival order, if each element~$i$ is active independently with probability $\alloci$, it is kept by the algorithm in the output with probability at least $c\alloci$.  

An OCRS is \emph{oblivious} if, for any arrival order, at any point in the procedure, given the set of elements that have arrived and those that have been accepted, and a newly arrived active element~$i$, the probability with which the algorithm accepts~$i$ is the same regardless of~$\allocs$.
\end{definition}


In Definition~\ref{def:OCRS}, the arrival order may depend on~$\allocs$ but is fixed before the procedure without knowledge of the set of active elements.  
It is said to be from an \emph{offline adversary}.  
Both more powerful adversaries, such as online and almighty ones, and weaker adversaries, such as random order ones, have been studied in the literature;
we focus on the offline adversary model throughout this work.
Without loss of generality, we assume the elements arrive in the order $1, 2, \ldots, n$.

In the literature, especially that on OCRSs, the case where $\feasible$ is a matroid on~$U$ has been studied the most.  
We focus on this case in this paper, and denote the matroid by~$\matroid$.

As an illustration, consider the simplest OCRS for the rank-1 uniform matroid.
Here $\allocs$ satisfies $\sum_i \alloci \leq 1$ and each element $i$ is active independently with probability $x_i$. An OCRS may select at most one element.
It is easy to see that no deterministic oblivious CRS can be $\Omega(1)$-balanced even in this simple setting, as pointed out by Chekuri et al.~\cite{CVZ14}.   Consider the randomized OCRS that accepts an active element with probability~$\tfrac 1 2$ whenever nothing has been selected.
By the end, the algorithm selects nothing with probability at least~$\tfrac 1 2$, and hence at the arrival of each element, with probability at least $\tfrac 1 2$ nothing has been selected; the element, if active, therefore has a chance of at least~$\tfrac 1 4$ for being selected.
This OCRS is hence $\tfrac 1 4$-selectable.  
Note that it is oblivious, since nowhere does it make use of the distribution~$\allocs$.
An OCRS with knowledge of~$\allocs$ can fine-tune its probability of accepting an active element~$i$: if nothing has been accepted when $i$ arrives, it may accept~$i$ with probability $\frac 1 2 / {(1 - \frac 1 2 \sum_{j < i} \alloci[j])}$.   
By a simple induction, it can been that each element~$i$ is accepted with probability precisely $\frac 1 2$ conditioning on its being active, and with probability precisely $1 - \frac 1 2 \sum_{j < i} \alloci[j]$, the algorithm has accepted nothing when $i$ arrives.
This OCRS that first appeared in \cite{Alaei14} is $\tfrac 1 2$-selectable and is optimal for this setting.




\subsection{Main Results}
\label{sec:results}

Our first main result is a simple $\frac 1 e$-selectable oblivious OCRS for the uniform rank~1 matroid, beating the $\frac 1 4$-selectable oblivious OCRS best known so far.
We also show that our scheme is the optimal oblivious OCRS in this setting.

To motivate the design of our OCRS, consider the following simple, almost wild thought experiment: when processing element~$i$, the non-oblivious, $\frac 1 2$-selectable OCRS above uses information of~$\allocs$ only for the elements that come before~$i$; when we do not know~$\allocs$, what if we see the arrived elements as a (partial) sample from the underlying distribution, and emulate the optimal non-oblivious OCRS, as if the distribution is just the empirical distribution given by that sample?
To substantiate this thought, when the first active element~$i$ arrives, since no preceding element is active, we are led to crudely estimate $\sum_{j < i} \alloci[j]$ to be~$0$, and we accept~$i$ with probability $\frac 1 2$; 
when the second active element~$i'$ arrives, if we did not accept the first active element, then we would estimate $\sum_{j < i'} \alloci[j]$ to be~$1$, and we should accept~$i'$ with probability $\frac 1 2 / (1 - \frac 1 2) = 1$.  
To summarize, the resulting algorithm accepts the first active element with probability~$\frac 1 2$; otherwise, it accepts the second active element.
The scheme is obviously oblivious.

In Section~\ref{sec:rank1}, we show that precisely this OCRS is $\tfrac 1 e$-selectable.  
It may be a surprise that this simple algorithm is in fact optimal.  
Our proof of optimality uses Ramsey theory, following a strategy introduced by Correa et al.~\cite{CDFS19}.


Going beyond the uniform rank~1 matroids,  Chekuri et al.~\cite{CVZ14}, among other constructions, gave an oblivious $\Omega(1)$-balanced CRS for the unsplittable flow  problem in trees, which implies an oblivious $\Omega(1)$-balanced CRS for laminar matroids.  
For the general matroid setting,  it was not known prior to this work whether $\Omega(1)$-balanced/selectable oblivious CRS/OCRS exists. 
 We show in Section~\ref{sec:negative} that no oblivious CRS can be $\Omega(1)$-balanced even in graphic or transversal matroids.  
 This immediately also implies that no oblivious $\Omega(1)$-selectable OCRS exists in these settings.
Our proof shows that a good oblivious CRS must be able to distinguish, given the set of active elements, between a uniform distribution and a distribution with a hidden structure buried, which is statistically impossible. 
In fact, our construction shows that, even if a CRS has access to a constant number of samples of~$R(\allocs)$, it still cannot be $\Omega(1)$-selectable.
We note that these impossibility results are not computational, and are therefore not conditional on complexity assumptions.



\subsection{Further Related Works}
\label{sec:related}

CRSs were first developed in the submodular optimization literature \cite{CVZ14,CCPV07}, and are closely related to \emph{correlation gaps} \cite{ADSY12}.  
The application in submodular optimization requires an additional \emph{monotonicity} property on the CRS; both monotone and non-monotone CRSs are interesting objects of study in various settings (e.g. \cite{BZ19}).
As noted above, Chekuri et al.~\cite{CVZ14} defined oblivious CRSs with the additional requirement that they be deterministic, a constraint that we relax in our definition.  It is not difficult to see that no oblivious, deterministic $\Omega(1)$-balanced CRS exists even for choosing one element from a set of two.
Showing no good randomized oblivious CRS exists for matroids is more challenging.


OCRSs were formally defined by Feldman et al.~\cite{FSZ16}, although a problem that is equivalent (under disguise) had been studied by Alaei~\cite{Alaei14} for the uniform matroids.
Feldman et al.\@ gave a $\frac 1 4$-selectable OCRS for matroids.
Lee and Singla~\cite{LS18} obtained a $\frac 1 2$-selectable matroid OCRS by a reverse reduction to the matroid prophet inequalities, albeit requiring the latter to be competitive against the stronger ex ante optimal.
Ezra et al.~\cite{EFGT20} gave a $0.337$-selectable OCRS for bipartite matchings.
Adamczyk and Wlodarczyk~\cite{AW18} studied OCRSs when the elements arrive in a uniformly random order, and obtained, among other results, $\frac{1}{p+1}$-selectable schemes for $p$ matroid intersections in this setting.
Lee and Singla~\cite{LS18}'s reduction also gave a $(1 - \frac 1 e)$-selectable matroid OCRS with random arrival for the uniform rank~1 matroid.

Dughmi~\cite{Dughmi20, Dughmi21} in a recent series of two works showed that the celebrated matroid secretary problem~\cite{BIKK18} is equivalent to the existence of good \emph{universal} OCRSs for general matroids.  
Interestingly, before showing this equivalence, in the first paper~\cite{Dughmi20}, Dughmi conjectured oblivious OCRSs to be a stepping stone toward the matroid secretary problem. 
Even though this was bypassed in the final proof of the equivalence~\cite{Dughmi21}, it is suggestive that oblivious OCRSs may find other applications.





\section{An Optimal Oblivious Single Item OCRS}
\label{sec:rank1}

In this section we give an optimal oblivious OCRS when at most one element can be accepted, i.e., for the uniform rank~1 matroid. 
This is known as the single item setting.
For a uniform rank~1 matroid $\matroid$ on the universe $U = [n]$, $\allocs \in \polytope_\matroid$ means $\alloci \geq 0$ for each~$i$, and $\sum_{i=1}^n x_i \le 1$. 
Feldman et al.~\cite{FSZ16} defined a general, simple family of OCRSs that they call \emph{greedy}.  
In the single-item case, only one such algorithm makes sense, and we term it \emph{the} greedy algorithm in this setting:

\begin{definition}[\cite{FSZ16}]
\label{def:greedy}
The \emph{Greedy} algorithm accepts with probability~$1$ the first active element.
\end{definition}


Our next algorithm, to be mixed with the Greedy algorithm, is equally simple:

\begin{definition}
The \emph{Accept Second} algorithm passes the first active element and accepts the second active one.  
\end{definition}


Sections \ref{sec:rank1-opt} and \ref{sec:ramsey} prove the following two theorems, respectively.

\begin{theorem} \label{thm:OCRS:optimal:upper}
Running the Greedy algorithm and the Accept Second algorithm each with probability~$\frac 1 2$ is a
$\frac{1}{e}$-selectable oblivious single element OCRS. 
\end{theorem}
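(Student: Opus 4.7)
My plan is to analyze the per-element acceptance probability of the mixture and reduce the claim to a clean probabilistic inequality. Fix an element $i$ and condition on it being active. Greedy accepts $i$ precisely when no preceding element is active, and Accept Second accepts $i$ precisely when exactly one preceding element is active. So, letting $X_{<i} = \sum_{j < i} B_j$ with $B_j \sim \mathrm{Bernoulli}(\alloci[j])$ independently, the mixture accepts $i$ given $i$ is active with probability
\[
  \tfrac{1}{2}\bigl(\Pr[X_{<i} = 0] + \Pr[X_{<i} = 1]\bigr) = \tfrac{1}{2}\Pr[X_{<i} \le 1].
\]
Since $\sum_j \alloci[j] \le 1$ implies $\mathbb{E}[X_{<i}] \le 1$, the theorem reduces to the following key lemma: \emph{if $X$ is a sum of independent Bernoulli random variables with $\mathbb{E}[X] \le 1$, then $\Pr[X \le 1] \ge 2/e$}. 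Obliviousness is immediate from the construction since neither subroutine consults $\allocs$.

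For the key lemma, I would first reduce to the case $\mathbb{E}[X] = 1$: if $\mathbb{E}[X] < 1$, adding one dummy Bernoulli of parameter $1 - \mathbb{E}[X]$ only increases $X$ stochastically, hence only decreases $\Pr[X \le 1]$. Next, I would apply a two-variable smoothing on the probability simplex. Fix all Bernoulli parameters except $p_1, p_2$ and fix their sum $p_1 + p_2 = s$; writing $Y = \sum_{j \ge 3} B_j$, $\tilde P = \Pr[Y = 0]$, and $\tilde T = \Pr[Y = 1]$, a direct expansion yields
\[
  \Pr[X \le 1] = \tilde P + (1-s)\tilde T + p_1 p_2(\tilde T - \tilde P).
\]
This is affine in $p_1 p_2$, so its minimum over $\{p_1, p_2 \ge 0,\ p_1 + p_2 = s\}$ is attained either at $p_1 = p_2 = s/2$ or at $p_1 p_2 = 0$. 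Iterating over all pairs, at any minimizer the nonzero parameters are all equal; the worst case has $p_j = 1/n$ on $n$ of them, yielding $\Pr[X \le 1] = (1 - 1/n)^{n-1}(2 - 1/n)$.

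The remaining step, and the main obstacle, is to verify the scalar inequality $(1 - 1/n)^{n-1}(2 - 1/n) \ge 2/e$ for every $n \ge 1$. Substituting $u = 1/n$, taking logarithms, and expanding both $\ln(1-u)$ and $\ln(2-u)$ in their Taylor series about $0$, a term-by-term cancellation produces
\[
  \ln\bigl[(1-u)^{1/u - 1}(2-u)\bigr] - \ln(2/e) = \sum_{k \ge 2} \frac{u^k}{k}\left(\frac{1}{k+1} - \frac{1}{2^k}\right).
\]
The $k = 1$ contribution vanishes exactly, and each remaining summand is nonnegative by the elementary inequality $2^k \ge k + 1$. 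This closes the argument, and the limiting value $2/e$ as $n \to \infty$ shows the constant $1/e$ in the theorem cannot be improved by this analysis.
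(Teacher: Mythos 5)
Your decomposition $\Pr[\text{accept } i \mid i \text{ active}] = \tfrac12 \Pr[X_{<i} \le 1]$ and the two-variable smoothing identity
$\Pr[X \le 1] = \tilde P + (1-s)\tilde T + p_1 p_2(\tilde T - \tilde P)$
are exactly the computation in the paper (there $p_0 = \tilde P$, $p_1 = \tilde T$, and the same affine-in-the-product rearrangement appears). So the core of the argument coincides. Two points where you diverge. First, the iteration: your claim ``at any minimizer the nonzero parameters are all equal'' is not literally true when $\tilde T = \tilde P$ (the slice is flat and a minimizer can have unequal positive coordinates); the paper handles the nontrivial case $\tilde P \le \tilde T$ by zeroing out a coordinate and invoking an induction on dimension, which is the cleaner way to land on a minimizer of the desired form. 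Your argument should be phrased as ``there exists a minimizer'' and, in the flat case, paired with an explicit reduction to fewer variables, otherwise the iteration as stated can stall. Second, and this is the genuine novelty: your verification of the scalar inequality $(1-1/n)^{n-1}(2-1/n) \ge 2/e$ by the Taylor expansion
$\ln\bigl[(1-u)^{1/u-1}(2-u)\bigr] - \ln(2/e) = \sum_{k\ge 2}\frac{u^k}{k}\Bigl(\frac{1}{k+1} - \frac{1}{2^k}\Bigr) \ge 0$
is a nice self-contained alternative to the paper's appeal to the monotonicity of $(1-1/n)^n + (1-1/n)^{n-1}$ together with its limit $2/e$. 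In fact the paper uses this monotonicity \emph{inside} its induction (to derive the contradiction in the zero-out case) without proving it, so your Taylor bound, while a different route, actually verifies the underlying numeric fact more explicitly. Your dummy-Bernoulli reduction to $\mathbb E[X]=1$ is a minor stylistic simplification not present in the paper.
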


\begin{theorem}
\label{thm:opt-lb}
For any $\epsilon > 0$, no oblivious single-item OCRS is $(\frac{1}{e} + \epsilon)$-selectable. 
\end{theorem}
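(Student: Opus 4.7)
The plan is to apply Ramsey theory, following the strategy of Correa et al.~\cite{CDFS19}, to reduce an arbitrary oblivious single-item OCRS to one with a highly regular structure, and then to argue that the selectability of such a regular scheme is at most $\tfrac{1}{e}$ via a Poisson-limit calculation at the last element of a carefully chosen sub-instance.

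An oblivious single-item OCRS is characterized by a function $p_t(A)$ giving the probability of accepting an active element at time $t$ when no element has yet been accepted and the set of past active elements is $A \subseteq [t-1]$; by the oblivious property, $p_t(A)$ does not depend on $\allocs$. Fix parameters $K, M, \eta > 0$ to be chosen later and discretize $[0,1]$ into $\lceil 1/\eta \rceil$ buckets. For each $k \in [K]$, color each $k$-tuple $i_1 < \dots < i_k$ in $[n]$ by the bucket containing $p_{i_k}(\{i_1, \dots, i_{k-1}\})$. Iteratively applying the hypergraph Ramsey theorem, for $n$ sufficiently large we obtain a set $S \subseteq [n]$ of size $M$ that is monochromatic at every level $k \in [K]$: there exist $\tilde{q}_1, \dots, \tilde{q}_K \in [0,1]$ with $p_{i_k}(\{i_1, \dots, i_{k-1}\}) \in [\tilde{q}_k, \tilde{q}_k + \eta]$ for every $k$-tuple in $S$.

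Now consider the adversarial instance with $x_i = 1/M$ for $i \in S$ and $x_i = 0$ otherwise. Since only elements of $S$ are ever active, up to error $O(\eta)$ the OCRS accepts its $k$-th active element (for $k \leq K$) with conditional probability $\tilde{q}_k$; writing $r_k = \tilde{q}_k \prod_{j<k}(1 - \tilde{q}_j)$, we have $\sum_{k \geq 1} r_k \leq 1$. Focus on the last element $i_M$ of $S$. Conditional on $i_M$ being active, the number of active elements of $S$ strictly before $i_M$ is $\mathrm{Binomial}(M-1, 1/M)$, which converges to $\mathrm{Poisson}(1)$ as $M \to \infty$. Thus the conditional probability that $i_M$ is accepted given that it is active tends, modulo the error terms, to
\[
\sum_{k \geq 0} \frac{e^{-1}}{k!}\, r_{k+1} \;=\; \frac{1}{e} \sum_{k \geq 1} \frac{r_k}{(k-1)!} \;\leq\; \frac{1}{e} \sum_{k \geq 1} r_k \;\leq\; \frac{1}{e}.
\]
Since a $c$-selectable OCRS must have this probability be at least $c$, we obtain $c \leq \tfrac{1}{e} + o(1)$.

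The main technical task is bookkeeping three separate sources of slack so their combined effect falls below $\epsilon$: (i) the rounding error $\eta$ from the Ramsey discretization, which propagates through up to $K$ factors $(1-\tilde{q}_j)$ in the $r_k$ products and so contributes at most $O(K^2 \eta)$; (ii) the truncation error from controlling only levels $k \leq K$, bounded by the Poisson tail $\sum_{k > K} e^{-1}/k!$, which vanishes as $K \to \infty$ and lets us dominate the uncontrolled contributions from higher-order $r_k$ by $1$; and (iii) the Binomial-to-Poisson convergence error, of order $O(1/M)$. Choosing parameters in the order $\epsilon \to K \to \eta \to M \to n$, each sufficiently generous relative to the next, drives the cumulative error below $\epsilon$ and establishes $c \leq \tfrac{1}{e} + \epsilon$ as claimed.
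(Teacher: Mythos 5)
Your proposal is correct and follows essentially the same strategy as the paper's proof: you invoke the hypergraph Ramsey theorem (iterated over levels $k \le K$, with discretized colors) to extract a large subset $S$ on which the OCRS acts, up to $O(\eta)$ error, like a ``counting-based'' strategy depending only on the number of previously active elements, and then you run the uniform instance $x_i = 1/M$ on $S$ and bound the selectability of the last element of $S$ via the Poisson limit of the $\mathrm{Binomial}(M-1,1/M)$ count. The paper packages the Ramsey step as Lemma~\ref{lemma:approx_ordinal} (proved by the same induction on the level $k$) and the Poisson-limit bound as Lemma~\ref{lemma:ordinal_optimality}, using an elementary estimate $q_k \le \frac{1}{e}+\epsilon$ in place of your cleaner observation that $\frac{1}{(k-1)!}\le 1$, but the two calculations are interchangeable, and your $\epsilon \to K \to \eta \to M \to n$ parameter ordering matches the paper's bookkeeping.
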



\subsection{Analysis of Selectability}
\label{sec:rank1-opt}

To prove Theorem \ref{thm:OCRS:optimal:upper}, we need to show that for every $i \in [n]$, $\Prx{\text{$i$ is accepted} \mid \text{$i$ is active}} \geq \frac{1}{e}$. Since our algorithm is an even mixture of Greedy and Accept Second, 
\[
\Prx{\text{$i$ is accepted} \mid \text{$i$ is active}} = \frac{1}{2} \cdot \Prx{\text{$i$ is the first or second active element} \mid \text{$i$ is active}}.
\]

Conditioning on $i$ being active, the probability that $i$ is the first active element equals $\prod_{j<i} (1-\alloci[j])$, and the probability that $i$ is the second active element equals $\sum_{j<i}\alloci[j] \prod_{k<i,k\neq j} (1-\alloc_k)$. Therefore,

\begin{align*}
\Prx{\text{$i$ is accepted} \mid \text{$i$ is active}}= \frac12 \left[\prod_{j<i} (1-\alloci[j]) + \sum_{j<i}\alloci[j] \prod_{k<i,k\neq j} (1-\alloc_k)\right].
\end{align*}

For each $i \ge 0$, define $f_{i}(\allocs) \eqdef \prod_{j\le i} (1-\alloci[j]) + \sum_{j\le i}\alloci[j] \prod_{k\le i,k\neq j} (1-\alloc_k) $. We now lower bound the value of $f_i(\allocs)$ for all $i$ and all possible choices of $\allocs$.

\begin{lemma}
\label{lemma:rank1lower}
$f_n(\allocs) \ge (1-\frac{1}{n})^n + (1-\frac{1}{n})^{n-1}$ for all $\allocs \in \{\allocs \mid \forall i, x_i \ge 0; \sum_{i=1}^n \alloci \le 1 \}$, where the equality is achieved when $\allocs = (\frac{1}{n}, \frac{1}{n}, \cdots, \frac{1}{n})$.
\end{lemma}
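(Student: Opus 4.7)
The plan is to show that the minimum of $f_n$ over $\{\allocs \ge 0 : \sum_i x_i \le 1\}$ is attained at the uniform point. I proceed in three steps: reducing to the face $\sum_i x_i = 1$, showing the minimizer must be uniform on its support, and then optimizing over the support size.

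First, observe that $f_n(\allocs) = \Prx{N \le 1}$ where $N = \sum_i B_i$ with $B_i \sim \mathrm{Bern}(x_i)$ independent; this probability is (weakly) decreasing in each $x_i$, so the minimum lies on the face $\sum_i x_i = 1$.

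Next, I exploit multilinearity. Fixing $s := x_i + x_j$ and all other coordinates, a direct expansion of $f_n$ (grouping terms by whether they involve $x_i, x_j$, or neither) yields
\begin{equation*}
f_n \;=\; P_{ij} \;+\; (1 - s)\, Q_{ij} \;+\; (Q_{ij} - P_{ij})\, x_i x_j,
\end{equation*}
where $P_{ij} = \prod_{k \notin \{i,j\}}(1-x_k)$ and $Q_{ij} = \sum_{k \notin \{i,j\}} x_k \prod_{l \notin \{i,j,k\}}(1-x_l)$. Hence the averaging move $(x_i, x_j) \mapsto (s/2, s/2)$, which maximizes $x_i x_j$ along the line $x_i + x_j = s$, strictly decreases $f_n$ whenever $Q_{ij} < P_{ij}$. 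At any minimizer $\allocs^\star$ on the face, let $i$ be an index achieving $x_i^\star = \max_k x_k^\star$ (the degenerate case $x_i^\star = 1$ forces $\allocs^\star$ to be a vertex with $f_n = 1$, already exceeding the target bound). Then for every $j \ne i$,
\begin{equation*}
\frac{Q_{ij}}{P_{ij}} \;=\; \sum_{k \notin \{i,j\}} \frac{x_k^\star}{1 - x_k^\star} \;\le\; \sum_{k \notin \{i,j\}} \frac{x_k^\star}{1 - x_i^\star} \;=\; \frac{1 - x_i^\star - x_j^\star}{1 - x_i^\star} \;\le\; 1,
\end{equation*}
with the last inequality strict whenever $x_j^\star > 0$. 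So if $0 < x_j^\star < x_i^\star$ for some $j$, averaging $(i,j)$ would strictly decrease $f_n$, contradicting minimality. Hence every coordinate of $\allocs^\star$ is either $0$ or $x_i^\star$, i.e., $\allocs^\star = (1/m, \ldots, 1/m, 0, \ldots, 0)$ up to permutation for some support size $m$.

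Finally, at such a configuration $f_n = g(m) := (1-1/m)^m + (1-1/m)^{m-1}$, and it remains to check that $g$ is strictly decreasing in $m \ge 1$. A short calculation gives $(\log g)'(m) = -\log\bigl(1 + \tfrac{1}{m-1}\bigr) + \tfrac{2}{2m-1}$, which is negative by the Pad\'e-type inequality $\log(1+u) > 2u/(2+u)$ with $u = 1/(m-1)$. Hence $g(m) \ge g(n)$ for every $m \le n$, yielding the claimed bound with equality at $\allocs = (1/n, \ldots, 1/n)$. The main obstacle is choosing the right averaging pair: the bound $x_k^\star \le x_i^\star$ is what tames the potentially large sum $\sum_{k \notin \{i,j\}} x_k^\star/(1-x_k^\star)$, and without pinning $i$ to the maximum index, pair-swaps need not decrease $f_n$.
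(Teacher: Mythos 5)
Your proof is correct, and it takes a genuinely different route from the paper's. You share the same multilinear rearrangement $f_n = P_{ij} + (1-s)Q_{ij} + (Q_{ij}-P_{ij})\,x_i x_j$ (the paper's equation~\eqref{eq:rearrange}), but you diverge after that. The paper proceeds by induction on $n$ and a case split on the sign of $P_{ij} - Q_{ij}$: when $P_{ij} > Q_{ij}$ it averages $(x_i,x_j)$ to get a strict decrease, and when $P_{ij} \le Q_{ij}$ it pushes one coordinate to $0$ and invokes the inductive hypothesis on $n-1$ items. You instead avoid induction entirely: after first observing that $f_n = \Prx{N\le 1}$ is coordinatewise decreasing (so the minimizer lives on $\sum_i x_i = 1$), you anchor $i$ at the argmax coordinate and show the ratio $Q_{ij}/P_{ij} = \sum_{k\notin\{i,j\}} x_k^\star/(1-x_k^\star) \le (1-x_i^\star - x_j^\star)/(1-x_i^\star) \le 1$, strict when $x_j^\star>0$, so that the averaging move is \emph{always} the right move; this directly forces the minimizer to be uniform on its support. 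You then close by an explicit monotonicity proof of $g(m)=(1-1/m)^m + (1-1/m)^{m-1}$ via the log-derivative and the Pad\'e bound $\log(1+u)>2u/(2+u)$. What this buys: the paper's case $P_{ij}\le Q_{ij}$ is subsumed by your observation that this case simply cannot occur when $i$ is the argmax, which eliminates the need for induction; and you actually supply a proof of the monotonicity of $g(m)$, a fact the paper invokes twice (once in the proof, once after the lemma, as "the limit equal to $2/e$") but never verifies. Your opening reduction to the face $\sum_i x_i = 1$ also fills a small implicit step in the paper, where the claim "$\allocs^*\ne\allocs^{\text{uniform}}$ implies two unequal coordinates" tacitly uses this monotonicity in the coordinates.
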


\begin{proof}
We prove the lemma by induction.
The base case when $n = 1$ is trivial.
For $n\ge 2$, since the domain space $\{\allocs \mid \forall i, x_i \ge 0; \sum_{i=1}^n \alloci \le 1 \}$ is compact and the function $f_n$ is continuous, we have that the minimum of $f_n$ is attained by  some $\allocs^* \in \{\allocs \mid \forall i, x_i \ge 0; \sum_{i=1}^n \alloci \le 1 \}$.

Let $\allocs^{\text{uniform}} = (\frac{1}{n}, \frac{1}{n}, \cdots, \frac{1}{n})$. We have $f_n(\allocs^{\text{uniform}}) = (1 - \frac{1}{n})^n + (1 - \frac{1}{n})^{n - 1}$.

Next, we prove that $f_n(\allocs^*) \ge f_n(\allocs^{\text{uniform}})$ by contradiction. 
Otherwise $\allocs^* \ne \allocs^{\text{uniform}}$ and there exists two indices $i, j$ such that $\alloci^{*} \neq \alloci[j]^{*}$. Fix $\allocs^{*}_{\text{-}\{i,j\}}$ and consider the following two quantities.

\[
p_0 = \prod_{k \in [n] \setminus \{i, j\}} (1-\alloci[k]); \qquad p_1 = \sum_{k \in [n] \setminus \{i, j\}} \alloci[k] \prod_{t \in [n] \setminus \{i, j, k\}} (1- \alloci[t]).
\]

We rearrange the formula of $f_n(\allocs^*)$ as the following.
\begin{align}
f_n(\allocs^*) &= (1-\alloci^*) (1-\alloci[j]^*) \cdot p_0 + \alloci^*(1-\alloci[j]^*) \cdot p_0 + \alloci[j]^*(1-\alloci^*) \cdot p_0 + (1-\alloci^*)(1-\alloci[j]^*) \cdot p_1 \nonumber \\
& = (1 - \alloci^* \alloci[j]^*) \cdot p_0 + (1-\alloci^*)(1-\alloci[j]^*) \cdot p_1 \nonumber \\
            &= p_0 + p_1 - (\alloci^* + \alloci[j]^*) \cdot p_1 + \alloci^* \alloci[j]^* \cdot (p_1 - p_0). \label{eq:rearrange}
\end{align}
We now examine two cases based on the relation between $p_0$ and $p_1$. We construct $\allocs'$ by adjusting the two coordinates $\alloci^*$ and $\alloci[j]^*$ while keeping all other variables the same as $\allocs^{*}_{\text{-}\{i,j\}}$. 

\begin{itemize}
    \item If $p_0 > p_1$, let $\alloci' = \alloci[j]' = \frac{\alloci^* +\alloci[j]^*}{2}$. Since $\alloci^* \ne \alloci[j]^*$, we have $\alloci' \alloc_j' > \alloci^* \alloc_j^*$. By equation~\eqref{eq:rearrange}, we have $f_n(\allocs') < f_n(\allocs^*)$, that contradicts the optimality of $\allocs^*$.
    
    \item If $p_0 \leq p_1$, let $\alloci' = 0$ and $\alloci[j]' = \alloci^* + \alloci[j]^*$. Since $\alloci' \alloci[j]' = 0$ and by equation~\eqref{eq:rearrange}, $f_n(\allocs') \leq f_n(\allocs^*)$. Observe that when $\alloci' = 0$, $\allocs_{\text{-}i}'$ is an $(n-1)$-dimensional vector and $f_{n - 1}(\allocs'_{-i}) = f_n(\allocs')$. Applying the inductive hypothesis, we have
    \[
    \left(1-\frac{1}{n-1}\right)^{n-1} + \left(1-\frac{1}{n-1}\right)^{n-2} \le f_n(\allocs') \le f_n(\allocs^*) < f_n(\allocs^{\text{uniform}}) = \left(1-\frac{1}{n}\right)^n + \left(1-\frac{1}{n}\right)^{n-1},
    \]
    which contradicts the fact that $(1-\frac{1}{n})^n + (1-\frac{1}{n})^{n-1}$ is monotonically decreasing in $n$.
\qedhere
\end{itemize}
\end{proof}

Finally, since $(1-\frac{1}{n})^n + (1-\frac{1}{n})^{n-1}$ is monotonically decreasing in $n$, with the limit equal to $\frac{2}{e}$, we conclude that our algorithm is $\frac{1}{e}$-selectable.

\subsection{Optimality}
\label{sec:ramsey}
In this section, we prove Theorem~\ref{thm:opt-lb}.  We first provide a road map of our proof.
\begin{itemize}
    \item We first define a restricted class of algorithms called \emph{counting-based} strategies and prove that no counting-based strategy is strictly better than $\frac{1}{e}$-selectable.
    \item Next, we prove that, for any oblivious OCRS, there must be a subset of elements on which it behaves like a counting-based strategy. 
    This is the most technical step and makes use of Ramsey theorem.
    \item Finally, we embed the hard instance into the subset and conclude that the hard instance for counting-based strategies applies to all oblivious algorithms.
\end{itemize}  

 We start with defining counting-based strategies. 


\begin{definition}
An OCRS is a \emph{counting-based strategy} if it is fully characterized by an infinite sequence of probabilities $(p_1, p_2, \dots )$ as follows: when the algorithm sees the $k$-th active item, the algorithm accepts (and stops) with probability $p_k$.
\end{definition}

For instance, the algorithm in Theorem~\ref{thm:OCRS:optimal:upper} is a counting-based strategy characterized by the sequence $(\frac{1}{2}, 1, 0, \dots)$. We now show that it is optimal among all counting-based strategies.

\begin{lemma}\label{lemma:ordinal_optimality}
For every $\epsilon > 0$, there exists a sufficiently large $n$, such that for the uniform instance with $n$ items (i.e. $x_i= \frac{1}{n}$ for all $i\in[n]$), any counting-based strategy cannot achieve $\left(\frac{1}{e} + \epsilon\right)$-selectability. 
\end{lemma}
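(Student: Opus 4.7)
The plan is to reduce the selectability question to a single target position on the uniform instance. First, I would reparametrize a counting-based strategy $(p_1, p_2, \ldots)$ by the quantities $q_k \eqdef p_k \prod_{j<k}(1-p_j)$, the probability that the strategy accepts exactly the $k$-th active element; a telescoping identity shows that $(p_k) \mapsto (q_k)$ is a bijection between counting-based strategies and nonnegative sequences satisfying $\sum_{k \geq 1} q_k \leq 1$. On the uniform instance with $\alloci = 1/n$, the selectability guarantee at element $i$ rewrites as
\[
\Prx{i \text{ is accepted} \mid i \text{ is active}} \;=\; \sum_{k \geq 1} q_k\, B_{i,k}, \qquad B_{i,k} \eqdef \binom{i-1}{k-1}\left(\tfrac{1}{n}\right)^{k-1}\left(1-\tfrac{1}{n}\right)^{i-k},
\]
since $B_{i,k}$ is simply the probability that exactly $k-1$ of the $i-1$ predecessors of $i$ are active.

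Next, I would restrict attention to the single position $i = n$, whose rank distribution (conditional on being active) is the most ``spread out.'' The coefficients $(B_{n,k})_k$ form a Binomial$(n-1, 1/n)$ pmf, and the one-line ratio calculation
\[
\frac{B_{n,k+1}}{B_{n,k}} \;=\; \frac{n-k}{k(n-1)}
\]
immediately yields $B_{n,1} = B_{n,2} = (1 - 1/n)^{n-1}$ together with $B_{n,k+1} < B_{n,k}$ for every $k \geq 2$. Therefore $\max_k B_{n,k} = (1 - 1/n)^{n-1}$, and for any $(q_k)$ the selectability at element $n$ is bounded by
\[
\Prx{n \text{ is accepted} \mid n \text{ is active}} \;=\; \sum_k q_k B_{n,k} \;\leq\; \left(\max_k B_{n,k}\right) \sum_k q_k \;\leq\; \left(1 - \tfrac{1}{n}\right)^{n-1}.
\]

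Finally, since $(1 - 1/n)^{n-1} \to 1/e$ as $n \to \infty$, for any $\epsilon > 0$ I can pick $n$ large enough that $(1 - 1/n)^{n-1} < 1/e + \epsilon$, and then element $n$ itself witnesses the failure of $(1/e + \epsilon)$-selectability. I do not anticipate a real obstacle here: morally, the rank of the last element among the active set is approximately Poisson$(1)$-distributed, and the maximum value of the Poisson$(1)$ pmf is exactly $1/e$, so no way of allocating a total mass at most $1$ across ranks can place more than roughly $1/e$ on any single rank for this element. The only step requiring minor care is the monotonicity claim about $B_{n,k}$, which drops out of the ratio computation above.
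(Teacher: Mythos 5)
Your proposal is correct and follows essentially the same route as the paper: you both reduce to element $n$ on the uniform instance, decompose the acceptance probability as a convex-type combination of the Binomial$(n-1,1/n)$ rank probabilities weighted by the strategy's acceptance probabilities (which sum to at most $1$), and bound the result by the maximum of that pmf, which tends to $1/e$. The only cosmetic difference is that you identify $\max_k B_{n,k} = (1-1/n)^{n-1}$ exactly via a ratio argument, whereas the paper bounds the binomial terms more crudely (by $1/k!$ for $k \geq 3$ and $(1-1/n)^{n-1-k}$ for $k < 3$) before taking $n$ large; your version is slightly cleaner but not a different idea.
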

\begin{proof}
To prove this, it suffices to show that $\Prx{\text{$n$ is selected} \mid \text{$n$ is active}} \leq \frac{1}{e} + \epsilon$. 

Let $q_k$ be the probability of having exactly $k$ elements active before $n$. On the uniform instance, $$q_k = \binom{n - 1}{k} \frac{1}{n^k} \left(1 - \frac{1}{n}\right)^{n - 1 - k} \leq \frac{n^k}{k!} \cdot \frac{1}{n^k} \cdot \left(1 - \frac{1}{n}\right)^{n - 1 - k} = \frac{1}{k!} \left(1 - \frac{1}{n}\right)^{n - 1 - k}.$$

When $k \geq 3$, $q_k \leq \frac{1}{k!} < \frac{1}{e}$. When $k < 3$, $q_k \leq \left(1 - \frac{1}{n}\right)^{n - 1 - k} \rightarrow \frac{1}{e}$ when $n \rightarrow \infty$. Hence, there is a sufficiently large $n$ such that $q_k \leq \frac{1}{e} + \epsilon$ for all $k$. 
Then 
\begin{align*}
\Prx{\text{$n$ is selected} \mid \text{$n$ is active}} = & \sum_{k=1}^{n} q_{k-1} \prod_{i=1}^{k-1} (1-p_i) \cdot p_k \le \left(\frac{1}{e} + \epsilon\right) \sum_{k=1}^n \prod_{i=1}^{k-1} (1-p_i) \cdot p_k \\
= & \left( \frac{1}{e} + \epsilon \right) \cdot \left(1 - \prod_{i=1}^{n} (1-p_i)\right) \le \frac{1}{e} + \epsilon,
\end{align*}
which concludes the proof.
\end{proof}

Next, we prove that no oblivious OCRS can do better than counting-based strategies in the worst case even when the total number of items $N$ is known to our algorithm. 

Given $N$, any oblivious OCRS can be characterized by a function $f: 2^{[N]} \to [0,1]$ that specifies the behaviour of the OCRS. In particular, for each set $T \in [N]$, $f(T)$ represents the probability that we select the $(i= \max T)$-th item at step $i$ given that $T$ is the set of active items so far and we have not selected any item yet.

Intuitively, an oblivious OCRS has no information about the vector $\allocs$ and hence, its decision rule should not depend on the indices of the active items. Indeed, for a counting-based strategy, the corresponding function $f$ only depends on the size $|T|$ of the active set. That is, $f(T) = p_{|T|}$ for all $T$, where $\{p_i\}$ is the probability sequence of the counting-based strategy.

To formalize the intuition, we prove that for any oblivious OCRS, there exists a subset of items $S \subseteq [N]$ on which the algorithm performs like a counting-based strategy. We define $(\epsilon, n)$-approximate counting-based strategies on $S$.

\begin{definition} \label{def:eps-n-approx}
An oblivious OCRS $A$ is $(\epsilon,n)$-approximate to a counting-based strategy $O$ on $S$ if $f(T) \in [p_{|T|}, p_{|T|}+\epsilon]$ for all $T \subseteq S$ and $|T| \le n$, where $(p_1, p_2, \cdots)$ is the probability sequence of $O$.
\end{definition}

\begin{lemma}\label{lemma:approx_ordinal}
For any integer $n,m$ and $\epsilon > 0$, there exists a sufficiently large integer $N$ such that for any oblivious OCRS $A$, there exists a subset $S \subseteq [N]$ of size $m$ such that $A$ is $(\epsilon,n)$-approximate to a counting-based strategy on $S$. 
\end{lemma}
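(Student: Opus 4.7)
The plan is a standard application of the hypergraph Ramsey theorem. First, discretize the range: partition $[0,1]$ into $k \eqdef \lceil 1/\epsilon \rceil$ intervals of length at most $\epsilon$, say $I_1, \ldots, I_k$. For each $j \in \{1, 2, \ldots, n\}$, the function $f$ that characterizes the oblivious OCRS $A$ induces a $k$-coloring $c_j$ of $\binom{[N]}{j}$ defined by $c_j(T) \eqdef r$, where $r$ is the index of the interval containing $f(T)$. A set $S \subseteq [N]$ on which every $c_j$ is constant will immediately yield the desired counting-based strategy: set $p_j$ to be the left endpoint of the interval on which $c_j$ is constant on $\binom{S}{j}$, and extend the sequence $(p_j)$ arbitrarily (say, to zero) beyond index $n$.

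To find such a common monochromatic $S$, I apply the hypergraph Ramsey theorem iteratively, peeling off one uniformity at a time in order of decreasing $j$. Starting from $[N]$, invoke the $n$-uniform Ramsey theorem with $k$ colors to extract a large subset $S_n \subseteq [N]$ on which $c_n$ is constant; then invoke the $(n-1)$-uniform Ramsey theorem on $S_n$ to obtain $S_{n-1} \subseteq S_n$ on which $c_{n-1}$ is constant, and so on, down to $S_1 \subseteq S_2$ on which $c_1$ is constant. The key observation is that monochromaticity of $c_j$ on $\binom{S_j}{j}$ persists under taking further subsets, so $S_1$ is simultaneously monochromatic for all of $c_1, \ldots, c_n$. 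Setting $|S_1| = m$ and letting $R_r(\ell; k)$ denote the $r$-uniform $k$-color Ramsey number, it suffices to take any $N \geq R_n\bigl(R_{n-1}(\cdots R_1(m; k) \cdots; k); k\bigr)$, which is finite by the classical hypergraph Ramsey theorem.

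With the common monochromatic $S$ of size $m$ in hand and the sequence $(p_j)$ defined as above, the condition of Definition~\ref{def:eps-n-approx} follows at once: for every $T \subseteq S$ with $1 \leq |T| \leq n$, the value $f(T)$ lies in the interval assigned by $c_{|T|}$, which by construction is contained in $[p_{|T|}, p_{|T|} + \epsilon]$.

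The main obstacle, as is typical for Ramsey-theoretic reductions, is less conceptual than bookkeeping: one must verify that the iteration order is consistent (refinements at smaller $j$ do not destroy monochromaticity already achieved at larger $j$, which is the easy direction since a subset of an already-monochromatic collection of $j$-sets remains monochromatic) and that the compounded Ramsey bound is correctly stated. Once the colorings $c_j$ are set up, the rest is a clean structural argument; no probabilistic reasoning enters this step, since the lemma is a purely combinatorial statement about the function $f$.
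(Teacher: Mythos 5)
Your proof is correct and follows essentially the same approach as the paper: discretize the range of $f$ into roughly $1/\epsilon$ color classes, apply the hypergraph Ramsey theorem once per uniformity, and exploit the fact that monochromaticity persists under taking subsets. The only cosmetic difference is the direction of iteration — you peel off uniformities from $n$ down to $1$ in an explicit chain, whereas the paper runs an induction on $n$ that handles uniformities $1, \ldots, n-1$ first (via the inductive hypothesis) and then applies one fresh Ramsey step for uniformity $n$; the resulting compounded bound is the same, and neither direction is logically preferable since, as you note, monochromaticity is downward-closed.
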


We first use the lemma to give a proof of Theorem~\ref{thm:opt-lb}, before proving the lemma itself.

\paragraph{Proof of Theorem~\ref{thm:opt-lb}}
By Lemma~\ref{lemma:approx_ordinal}, for any integer $n$ and $\epsilon>0$, there exists an integer $N$ such that for any oblivious OCRS $A$, there exists $S \subseteq [N]$ of size $n$ such that $A$ is $(\epsilon,n)$-approximate to a counting-based strategy $O$ on $S$.

Let $(p_1,p_2, \cdots)$ be the corresponding probability sequence of $O$.
Consider any instance that is supported on $S$, i.e. $x_i = 0$ for all $i \not\in S$. We have that for each $i \in S$,
\begin{align*}
& \Prx{A \text{ selects } i \mid i \text{ is active}} \\
={} & \sum_{T \ni i; T \subseteq S} \Prx{T-i \text{ are the active items before } i \text{ and are not selected by } A} \cdot f(T) \\
\le{} & \sum_{T \ni i; T \subseteq S} \Prx{T-i \text{ are the active items before } i \text{ and are not selected by } O} \cdot (p_{|T|} + \epsilon) \\
\le{} & \Prx{O \text{ selects } i \mid i \text{ is active}} + \epsilon,
\end{align*}
where the first inequality follows from the fact that whenever we have a chance to select an item, $O$ selects it with smaller probability than $A$ (by Definition \ref{def:eps-n-approx}).

Therefore, if $A$ is $(\frac{1}{e}+2\epsilon)$-selectable, $O$ is $(\frac{1}{e}+\epsilon)$-selectable for all instances defined on $S$. However, this contradicts Lemma \ref{lemma:ordinal_optimality}, that states any counting-based strategy is no better than $(\frac{1}{e}+\epsilon)$-selectable for the uniform instance on $S$, i.e. $x_i = \frac{1}{n}$ for $i\in S$. This concludes the proof of the theorem.

\paragraph{Proof of Lemma~\ref{lemma:approx_ordinal}.}
We use the Hypergraph Ramsey theorem. For completeness, we include a few basic concepts of hypergraphs.
A hypergraph is called a $k$-uniform hypergraph if each of its edges contains $k$ vertices. A complete $k$-uniform hypergraph is a hypergraph $G = (V, E)$ where $E$ is the set of all size-$k$ subsets of $V$. A clique in a $k$-uniform hypergraph is a subset $S \subset V$ such that all size-$k$ subsets of $S$ are in~$E$. 

\begin{lemma}[Hypergraph Ramsey Theorem \cite{ramsey1930problem,erdos1952combinatorial}] \label{lemma:ramsey}
Given any positive integer $n_0, k$ and $c$, there is an integer $n_1$ which has the following property: for any complete $k$-uniform hypergraph with more than $n_1$ vertices, no matter how we color its edges with $c$ colors, there is always a monochromatic clique of size $n_0$. 
\end{lemma}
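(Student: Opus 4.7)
The plan is to prove the Hypergraph Ramsey Theorem by induction on the uniformity parameter $k$. The base case $k=1$ is immediate: by the pigeonhole principle, any $c$-coloring of $c(n_0 - 1) + 1$ vertices produces a monochromatic subset of size $n_0$, which is exactly a monochromatic $1$-uniform clique.

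For the inductive step, suppose we have a function $R_{k-1}(\cdot, c)$ satisfying the theorem for $(k-1)$-uniform hypergraphs. Given a $c$-coloring of the edges of a complete $k$-uniform hypergraph on a very large vertex set $V_0$, I would perform the classical dependent-choice construction: iteratively pick vertices $v_1, v_2, \ldots, v_M$ together with a nested sequence $V_0 \supseteq V_1 \supseteq V_2 \supseteq \cdots$. Having already selected $v_1, \ldots, v_i$ and $V_i \subseteq V_{i-1} \setminus \{v_i\}$, choose $v_{i+1} \in V_i$ arbitrarily and consider the auxiliary $c$-coloring of the $(k-1)$-edges of $V_i \setminus \{v_{i+1}\}$ in which each $(k-1)$-set $E$ receives the color that $E \cup \{v_{i+1}\}$ has in the original $k$-uniform coloring. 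Provided $|V_i|$ is large enough, the inductive hypothesis yields a monochromatic $(k-1)$-clique $V_{i+1} \subseteq V_i \setminus \{v_{i+1}\}$ of whatever size the later rounds require, with some color $c_{i+1}$.

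The construction ensures a key compatibility property: for any indices $i_1 < i_2 < \cdots < i_k$ drawn from $\{1, \ldots, M\}$, the $k$-edge $\{v_{i_1}, \ldots, v_{i_k}\}$ has color $c_{i_1}$ in the original coloring, because $\{v_{i_2}, \ldots, v_{i_k}\}$ is a $(k-1)$-subset of the monochromatic clique $V_{i_1}$ associated with $v_{i_1}$. Running the construction for $M = c(n_0 - 1) + 1$ steps and applying pigeonhole to the colors $c_1, \ldots, c_M$ produces a color $c^*$ appearing at least $n_0$ times; the corresponding $v_i$'s form a monochromatic $k$-clique of size $n_0$, as desired.

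The main obstacle is the size bookkeeping: at each step $i$, the set $V_i$ must remain large enough for the inductive hypothesis to be invoked in every subsequent round. Working backwards from the final round, where it suffices that $|V_M| \geq k - 1$, each iteration inflates the required cardinality by one call to $R_{k-1}(\cdot, c)$, so the starting size $N = |V_0|$ is bounded by an $M$-fold iterated composition of $R_{k-1}$. Unwinding the outer induction then yields a tower-type bound in $k$. The argument itself is conceptually just dependent choice followed by pigeonhole; only the quantitative bound requires care, and for our application only the existence of some finite $n_1$ is needed.
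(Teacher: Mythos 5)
The paper states this lemma as a classical result and cites \cite{ramsey1930problem,erdos1952combinatorial} without giving a proof, so there is nothing in the paper to compare against. Your argument is the standard and correct proof: induction on the uniformity $k$, with the Erd\H{o}s--Rado dependent-choice construction producing a ``pre-homogeneous'' sequence $v_1,\dots,v_M$ whose edge colors depend only on the least index, followed by pigeonhole on the colors $c_1,\dots,c_M$; the tower-type size bookkeeping you describe is exactly what is needed, and for the paper's purposes only the finiteness of $n_1$ matters.
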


We now prove Lemma~\ref{lemma:approx_ordinal} by induction. For the base case when $n=1$, consider a graph $G$ with $N = \left(\lfloor \frac{1}{\epsilon} \rfloor + 1\right) m$ vertices. Given an oblivious OCRS, we color each vertex $i$ with color $\lfloor \frac{1}{\epsilon} f(\{i\}) \rfloor$.
Since there are at most $\lfloor \frac{1}{\epsilon}\rfloor  + 1$ different colors, by the pigeonhole principle, there are at least $m$ vertices sharing the same color.
Let $S$ be the set of these $m$ vertices and $c$ be their color. We claim that $A$ is $(\epsilon,1)$-approximate to any counting-based strategy on $S$ with $p_1=c\epsilon$.

Suppose the lemma holds for $n-1$, we consider the case of $n$. Applying Lemma~\ref{lemma:ramsey} to $n_0=m, k=n$ and $c=\lfloor\frac{1}{\epsilon}\rfloor + 1$, let $n_1$ be the sufficiently large number such that for any complete $k$-uniform hypergraph with more than $n_1$ vertices, no matter how we color its edges with $c$ colors, there exists a monochromatic clique of size $n_0$.

By induction hypothesis, there exists $N$ such that for any oblivious OCRS $\pi$, there exists a subset $S_0 \subseteq [N]$ of size $n_1$ such that $\pi$ is $(\epsilon, n-1)$-approximate to a counting-based strategy on $S_0$. Now we consider a complete $n$-uniform hypergraph $G$ whose vertices are $S_0$. For each hyperedge $T \subset S_0$ of size $n$, we color the edge with color $\lfloor \frac{1}{\epsilon} f(T)\rfloor$. Observe that there are at most $\lfloor \frac{1}{\epsilon} \rfloor + 1$ different colors. By Lemma~\ref{lemma:ramsey}, this hypergraph includes a monochromatic clique of size $m$. We denote the vertex set of this clique by $S$ and let $c$ be its corresponding color. By definition of the hypergraph, we have that $f(T) \in [c\epsilon, (c+1)\epsilon]$ for $T \subset S$ and $|T|=n$. Combining this with the inductive hypothesis that $\pi$ is $(\epsilon,n-1)$-approximate to a counting-based strategy on $S \subseteq S_0$, we conclude that $\pi$ is further $(\epsilon,n)$-approximate to a counting-based strategy on $S$ with $p_{|T|} = c\epsilon$.

\section{Non-existence of $\Omega(1)$-selectable Oblivious CRSs for Matroids}
\label{sec:negative}
In this section we show that, without knowledge of~$\allocs$, $\Omega(1)$-selectable OCRS is impossible for general matroids. 
In fact, even in the offline setting, $\Omega(1)$-balanced CRS does not exist for graphic or transversal matroids. Furthermore, no such CRS exists even with a constant number of samples.
We provide the proofs for graphic matroids and transversal matroids in Section~\ref{subsec:graphic} and Section~\ref{subsec:transversal}, respectively.


\begin{theorem} \label{thm:non_existence_graphical}
  For any $c \in (0, 1]$, there is no oblivious $c$-balanced CRS for graphic matroids or transversal matroids.
Moreover, the impossibility persists even if the CRS has access to $O(1)$ samples of $R(\allocs)$. 
\end{theorem}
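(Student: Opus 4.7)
The plan is to prove Theorem~\ref{thm:non_existence_graphical} by handling graphic matroids (Section~\ref{subsec:graphic}) and transversal matroids (Section~\ref{subsec:transversal}) separately, using constructions tailored to each class but unified by the same high-level template. For a target constant $c > 0$, I will exhibit an instance (a graph or bipartite graph of size growing with $1/c$) together with a family of ex ante feasible distributions $\{\allocs^{(\alpha)}\}_{\alpha \in A} \subseteq \polytope_\matroid$, parameterized by a ``hidden structure'' $\alpha$, such that no oblivious CRS (even with $O(1)$ samples of $R(\allocs)$) can be $c$-balanced for every $\allocs^{(\alpha)}$ simultaneously. In the graphic case, $\alpha$ will encode a planted circuit whose edges carry marginals close to the boundary of the polytope; in the transversal case, $\alpha$ will encode a planted witness matching for a large independent set. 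In both cases the planted elements will form a circuit of the matroid, so that any feasible output must omit at least one of them.

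The first step in the proof is to show that the induced active-set distributions $R(\allocs^{(\alpha)})$ are pairwise close in total variation: the ``hidden'' part is designed as a small perturbation of a common base distribution, so that standard bounds (summing marginal TV distances across independent coordinates, then lifting to constant-sized tuples of samples) yield a $o(1)$ total variation distance between any two $\allocs^{(\alpha)}, \allocs^{(\beta)}$. Because the oblivious CRS's action distribution on $R$ depends only on $R$ (and on the $O(1)$ reference samples), this statistical closeness translates directly into the following structural statement: for every edge $e$, the quantity $\Prx{e \in \pi(R) \mid R \sim \allocs^{(\alpha)}}$ is essentially the same across all $\alpha$, up to an additive error that can be driven to zero by scaling the instance.

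The second step is to combine the $c$-balance requirements across the family through a Yao-style averaging argument. For each $\alpha$, $c$-balance on the planted hidden structure $W_\alpha$ forces $\sum_{e \in W_\alpha} \Prx{e \in \pi(R) \mid R \sim \allocs^{(\alpha)}}$ to be at least $c$ times the total marginal mass planted on $W_\alpha$. Averaging over a uniformly random $\alpha$ and applying the statistical closeness from the previous step, this lower bound becomes a constraint on the CRS's expected acceptance under the common base distribution, summed over a family of circuits that together cover many elements many times. On the other hand, matroid feasibility imposes the upper bound $\sum_{e} \Prx{e \in \pi(R)} \le \operatorname{rank}(\matroid)$ on the CRS's output. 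Choosing the family so that the double-counted lower bound exceeds the single matroid rank bound by any desired factor then yields the desired contradiction for any constant $c > 0$.

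The main obstacle is the construction itself: the planted marginals on $W_\alpha$ must be \emph{large} enough to force a sizeable acceptance target (requiring $\allocs^{(\alpha)}$ to sit near a face of $\polytope_\matroid$), and the family $A$ must be \emph{rich} enough that its circuits overlap heavily (so the averaged lower bound overwhelms the matroid rank). Yet at the same time, to preserve statistical indistinguishability, the family must be a \emph{small} perturbation of a common base distribution and the per-coordinate TV gaps must scale as $o(1/|A|)$. Reconciling these three competing demands while staying inside the matroid polytope, and separately instantiating them in the graphic and transversal worlds (where the circuit geometry is very different), is the technical heart of the two subsections.
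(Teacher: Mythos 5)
Your proposal identifies the correct high-level template (plant a hidden structure, argue the CRS cannot distinguish it, combine rank-based upper bounds with $c$-balance lower bounds), and it correctly flags the central tension: the planted marginals must be large to trigger a strong $c$-balance obligation, yet the distributions must be statistically indistinguishable. However, you leave precisely that tension unresolved, and the resolution is where the actual argument lives. As stated, your first step---showing that the $R(\allocs^{(\alpha)})$ are pairwise $o(1)$-close in total variation---cannot coexist with the planted marginals being near the boundary of the polytope. In the paper's construction for graphic matroids (complete bipartite $K_{N,M}$), the planted distribution $\allocs^i$ sets $\alloc_e = 1$ on all $M$ edges incident to $u_i$, versus $1/M$ under the base $\avgallocs$; the TV distance between $R(\allocs^i)$ and $R(\avgallocs)$ is close to $1$ for large $M$, not $o(1)$. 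A TV-closeness route would force you to shrink the planted marginals, which would correspondingly shrink the $c$-balance obligation $c\alloc_e$ and leave no contradiction.

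The paper sidesteps this entirely by replacing approximate indistinguishability with an \emph{exact conditional identity}. Let $U^* = \{u_i : \delta(u_i) \subseteq R(\avgallocs)\}$. Because the coordinates of $R(\avgallocs)$ are independent, the distribution of $R(\avgallocs)$ conditioned on $u_i \in U^*$ is \emph{exactly} the distribution of $R(\allocs^i)$ --- the planted instance with edges at $u_i$ forced active. So an oblivious CRS run on $R(\avgallocs)$ must, conditional on $u_i \in U^*$, accept each $(u_i,v_j)$ with probability $\ge c$, because it would have to do so on $R(\allocs^i)$ and the two situations are literally the same. This gives the lower bound $\Ex{|\pi(R(\avgallocs)) \cap \delta(U^*)|} \ge cNM \cdot M^{-M}$. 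The upper bound is also localized: rather than bounding $\sum_e \Prx{e \in \pi(R)}$ by the global rank $N + M - 1$ (which is large and would not yield a contradiction), the paper bounds $\Ex{|\pi(R(\avgallocs)) \cap \delta(U^*)|} \le \Ex{\rank(\delta(U^*))} = \Ex{|U^*|} + M - 1 = N M^{-M} + M - 1$, which is small because $|U^*|$ is typically tiny. Comparing the two bounds gives $c \le 1/M + M^{M-1}(M-1)/N$, made arbitrarily small by taking $M$ large and $N \gg M^M$. The extension to $O(1)$ samples simply redefines $U^*$ to require $\delta(u)$ to appear in every sample as well, lowering the membership probability to $M^{-(s+1)M}$ but keeping the conditional identity.

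So, concretely: you are missing the ``conditional identity'' idea, which is the essential device that replaces both your TV bound and your Yao averaging. Without it, the two sides of your tension pull in opposite directions and the argument does not close. The paper's localized rank bound on $\delta(U^*)$ (rather than the whole matroid) is the second piece you would need, since the global rank is too large to contradict anything.
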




\subsection{Graphic Matroids}
\label{subsec:graphic}

Recall that a graphic matroid $(E,\mathcal{I})$ is defined on an undirected graph $G$ with edge set~$E$ such that $I \subseteq E$ is in $\mathcal{I}$ if{f} $I$ is a forest in $G$.

Consider a complete bipartite graph $K_{N,M}$ with bipartition $U=\{u_1,\ldots,u_N\}$ and $V=\{v_1,\ldots,v_M\}$. 
Let $\delta(u)$ denote the set of edges incident to a vertex $u \in U \cup V$. Let $\polytope$ be the polytope of the graphic matroid on it.

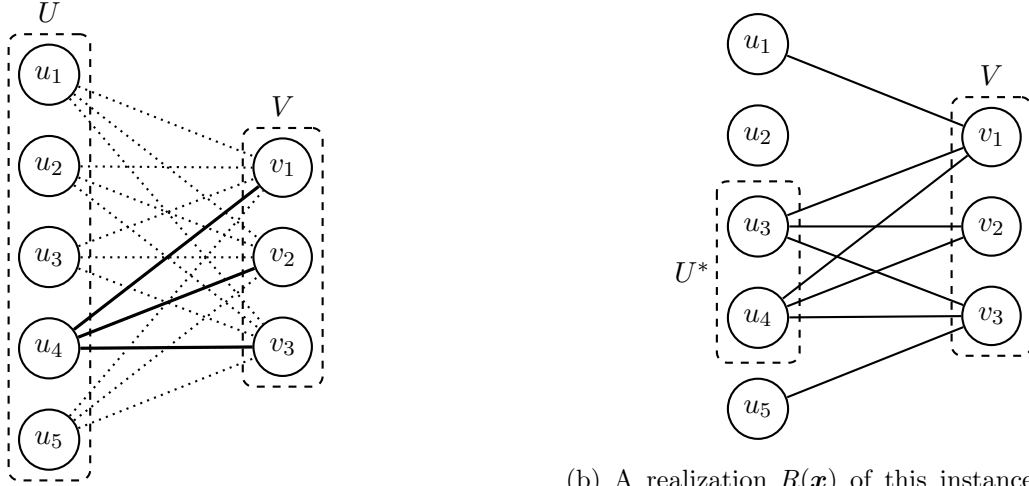
\begin{figure}[H]
\centering
\begin{subfigure}[b]{0.45\textwidth}
\centering
\begin{tikzpicture}[thick,amat/.style={matrix of nodes,nodes in empty cells,
  fsnode/.style={draw,solid,circle,execute at begin node={$u_{\the\pgfmatrixcurrentrow}$}},
  ssnode/.style={draw,solid,circle,execute at begin node={$v_{\the\pgfmatrixcurrentrow}$}}}]

 \matrix[amat,nodes=fsnode,label=above:$U$,row sep=1em,dashed,draw,rounded corners] (mat1) {
 \\
 \\ 
 \\
 \\
 \\};

 \matrix[amat,right=2cm of mat1,nodes=ssnode,label=above:$V$,row sep=1em,dashed,draw,rounded corners] (mat2) {\\
 \\ 
 \\};
\foreach \x in {1,...,5}
    \foreach \y in {1,...,3} 
    {\draw  (mat1-\x-1) edge[thick, dotted] (mat2-\y-1); } 

\foreach \y in {1,...,3} 
	{\draw  (mat1-4-1) edge[very thick] (mat2-\y-1); }
\end{tikzpicture}
\caption{The bipartite complete graph $K_{N,M}$. Here $i = 4$, edges adjacent to $u_i$ has probability $x_e^i = 1$ of being active, while other edges each only has probability $1/M$ of being active. Here $N$ should be a large enough number such that $N \gg M^M$.}
\end{subfigure}
\hfill
\begin{subfigure}[b]{0.45\textwidth}
\centering
\begin{tikzpicture}[thick,amat/.style={matrix of nodes,nodes in empty cells,
  fsnode/.style={draw,solid,circle,execute at begin node={$u_{\the\pgfmatrixcurrentrow}$}},
  ssnode/.style={draw,solid,circle,execute at begin node={$v_{\the\pgfmatrixcurrentrow}$}}}]

 \matrix[amat,nodes=fsnode,row sep=1em] (mat1) {
 \\
 \\ 
 \\
 \\
 \\};

 \matrix[amat,right=2cm of mat1,nodes=ssnode,label=above:$V$,row sep=1em,dashed,draw,rounded corners] (mat2) {\\
 \\ 
 \\};
\draw  (mat1-1-1) edge (mat2-1-1); 
\draw  (mat1-5-1) edge (mat2-3-1); 

\draw[thick,dashed,rounded corners] ($(mat1-3-1)+(-0.55,0.6)$)  rectangle node[left, xshift = -0.5cm] {$U^*$} ($(mat1-4-1)+(0.55,-0.6)$);

\foreach \y in {1,...,3} 
	{\draw  (mat1-3-1) edge (mat2-\y-1); \draw  (mat1-4-1) edge (mat2-\y-1); }
\end{tikzpicture}
\caption{A realization $R(\allocs)$ of this instance. $U^*$ is the set of all vertices on left side of degree $M$. If $N$ is large enough there will be many vertices happen to be in $U^*$. These vertices in $U^*$ are indistinguishable to CRS, and $u_4$ ($i = 4$) is hidden between them.}
\end{subfigure}
\caption{The hard instance for graphic matroids}
\label{fig:Graphical}
\end{figure}

We give $N+1$ points $\allocs^1, \ldots, \allocs^N, \avgallocs$ in the matroid polytope $\polytope$, and show that, for any $c \in (0, 1]$, there are large enough $N$ and~$M$, such that no oblivious CRS can be $c$-balanced for the graphic matroid on the complete bipartite graph~$K_{N,M}$. The point $\avgallocs$ is auxiliary in the proof.

For any $i \in [N]$, let $\allocs^i$ be 
$$
\alloc^i_e  \coloneqq \begin{cases}
1, & \text{if } e \in \delta(u_i);  \\ 
1/M, & \text{otherwise.} 
\end{cases}
$$

Let $\avgallocs$ be the vector with weight $\tfrac 1 M$ on all edges, i.e., $y_e = \frac 1 M$ for all $e$.
Note that $\avgallocs = \frac N {N+M-1} \frac 1 N \sum_{i = 1}^N \allocs^i$.

We first verify $\allocs^i \in \polytope$ for each~$i$, which immediately implies $\avgallocs \in \polytope$ as well.
Let $T_{i, j}$ be the tree whose edge set is $\delta(u_i) \cup \delta(v_j)$.
It is easy to verify that $\allocs^i$ is the average of the indicator vectors of $T_{i, j}$ as $j$ ranges from 1 to~$M$; that is, $\allocs^i = \frac 1 M \sum_{j = 1}^M \indicator_{T_{i, j}}$, where $\indicator_{T_{i,j}}$ is the indicator vector of~$T_{i, j}$.
Finally consider the random set $U^*$ defined as follows: 
$$U^*=\{u_i\in U \mid \delta(u_i)\subseteq R(\avgallocs)\}.$$
For the sake of a contradiction, suppose there is a $c$-balanced CRS $\pi$ for $\polytope$. 
We analize the CRS on $R(\avgallocs)$ by considering the expected number of elements accepted from the set of edges incident to $U^*$ (i.e. $\delta(U^*)$). We will first give an upper bound using the feasibility constraints. Then we will provide a lower bound using the assumption that the CRS is $c$-balanced to get a contradiction. 
Let $\pi(R(\avgallocs))$ be the set of accepted elements, then since the output must be an independent set, we have:
$$\Ex {|\pi(R(\avgallocs))\cap \delta(U^*)|}\leq \Ex {\rank(\delta(U^*))}.$$
And the rank of $\delta(U^*)$ is just $|U^*|+|V|-1$. Observe that $|U^*|$ follows a binomial distribution with parameters $N, M^{-M}$ (each vertex $u_i$ belongs to $U^*$ with probability $M^{-M}$ independently). So we get the following upper bound: 
$$\Ex {|\pi(R(\avgallocs))\cap \delta(U^*)|}\leq \frac{N}{M^M}+M-1.$$
On the other hand, $\Ex {|\pi(R(\avgallocs))\cap \delta(U^*)|}$ can be rewritten as: $$\sum_{e\in E} \Prx{e\in \pi(R(\avgallocs)) \mid e\in \delta(U^*)} \Prx{e\in \delta(U^*)}.$$
Take an arbitrary edge $e=(u_i,v_j)$. First note that the probability of $e\in \delta(U^*)$ is just the probability of $u_i\in U^*$, so it is equal to $M^{-M}$. The crucial observation is that the distribution of $R(\avgallocs)$ conditioned on $(u_i,v_j)\in \delta(U^*)$ is identical to the distribution of $R(\alloc^i)$, so using the assumption that $\pi$ is $c$-balanced and that $(u_i,v_j)$ is always active for $x^i$, we get:
$$\Prx{(u_i,v_j)\in \pi(R(\avgallocs)) \mid (u_i,v_j)\in \delta(U^*)} = \Prx{(u_i,v_j)\in \pi(R(\alloc^i))} \geq c.$$ 
Since there are $NM$ edges in total, we obtain the following lower bound: 
$$\Ex {|\pi(R(\avgallocs))\cap \delta(U^*)|} \geq \frac{cN}{M^{M-1}}.$$
By putting together our upper and lower bounds, we get that:
$$c\leq \frac{1}{M}+\frac{M^{M-1}(M-1)}{N},$$
which can be arbitrarily small for $N\gg M^M$ and $M\gg 1$. This finishes the proof in the oblivious case. \\

For the case in which the CRS has access to a constant number of samples, the idea is almost the same. The only difference is that now we define $U^*$ as the set of vertices $u$ such that $\delta(u)$ is contained in every sample from $R(\avgallocs)$ and in the final realization of $R(\avgallocs)$. If we have $s$ samples, then each vertex has a probability $M^{-(s+1)M}$ of being in $U^*$. The same arguments hold, we just need $N\gg M^{(s+1)M}$ for the contradiction.

\subsection{Transversal Matroids}
\label{subsec:transversal}

Recall that a transversal matroid $\matroid=(L,\mathcal{I})$ can be defined by a bipartite graph $G=(L\sqcup R, E)$ such that $I\in \mathcal{I}$ iff there is a matching in $G$ covering $I$. The proof is very similar to the one provided for graphic matroids.

Consider the transversal matroid defined by the bipartite graph $G=(L\sqcup R, E)$ (see Figure \ref{fig:Transversal_matroid}), where
\begin{itemize}
    \item $L=\{l_{i,j} \mid i\in [N], j\in [M]\}$,
    \item $R=\{r_i \mid i\in [N]\} \cup \{u_k \mid k\in [M-1]\}$,
    \item $E=\{(l_{i,j},r_i) \mid i\in [N], j\in [M]\} \cup \{(l_{i,j},u_k) \mid i\in [N], j\in [M], k\in [M-1]\}$.
\end{itemize}

In other words, $L$ contains $N$ sets of $M$ vertices, namely $L=\bigcup_{i\in [N]} L_i$ with $L_i=\{l_{i,j} \mid j\in [M]\}$; for each $L_i$ there is a vertex $r_i$ fully connected to it. Additionally, there is a set of $M-1$ vertices in $R$ fully connected to $L$. 
Now, we claim that for any $i' \in [N]$, the following point $\allocs^{i'}$ belongs to the matroid polytope: 
$$\alloc^{i'}_{l_{i,j}}\coloneqq\begin{cases}
1, & \text{if } i=i';  \\ 
1/M, & \text{otherwise}. 
\end{cases}$$

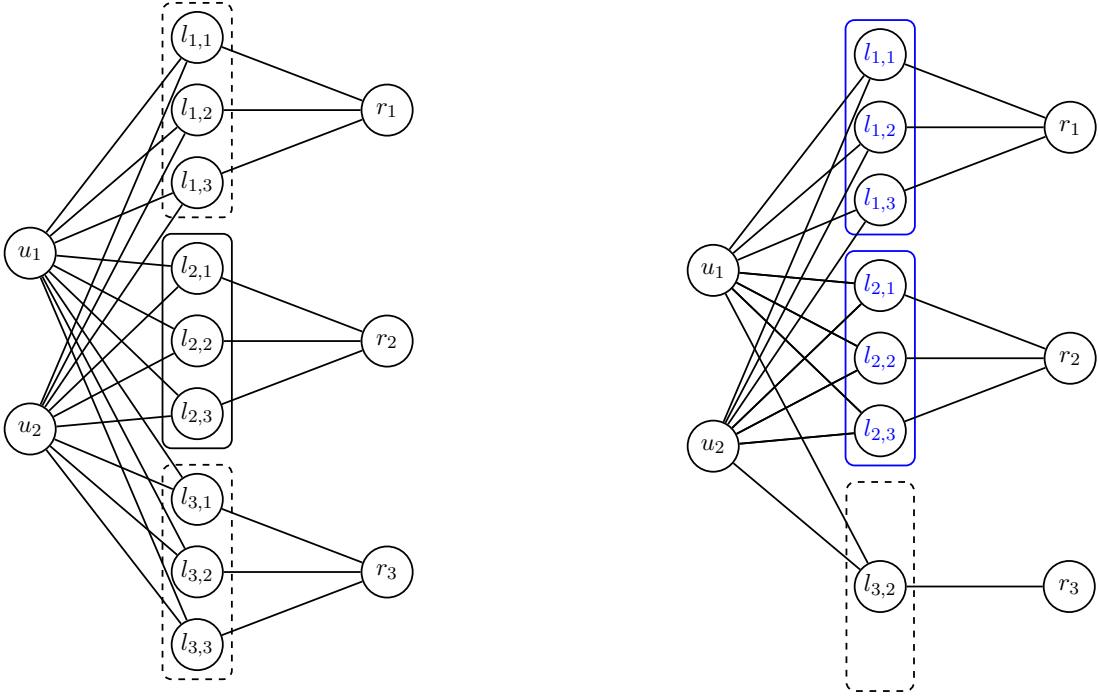
\begin{figure}[H]
\centering
\begin{subfigure}{0.45 \linewidth}
\centering
\scalebox{0.85}{
\begin{tikzpicture}[thick,amat/.style={matrix of nodes,nodes in empty cells}]

 \matrix[amat,nodes={inner sep=0pt, minimum size = 0.8cm, draw,solid,circle,execute at begin node={$l_{1, \the\pgfmatrixcurrentrow}$}},row sep=0.8em,dashed,draw,rounded corners] (mat1) {
 \\
 \\ 
 \\};

  \matrix[amat,nodes={inner sep=0pt, minimum size = 0.8cm, draw,solid,circle,execute at begin node={$l_{2, \the\pgfmatrixcurrentrow}$}},row sep=0.8em,thick,draw,rounded corners,below = of mat1, yshift = 2em] (mat2) {
 \\
 \\ 
 \\};

  \matrix[amat,nodes={inner sep=0pt, minimum size = 0.8cm, draw,solid,circle,execute at begin node={$l_{3, \the\pgfmatrixcurrentrow}$}},row sep=0.8em,dashed,draw,rounded corners, below = of mat2, yshift = 2em] (mat3) {
 \\
 \\ 
 \\};

 \node (r1) [draw,solid,circle,execute at begin node={$r_1$}, right = 2cm of mat1, minimum size = 0.8cm] {};

 \node (r2) [draw,solid,circle,execute at begin node={$r_2$}, right = 2cm of mat2, minimum size = 0.8cm] {};

 \node (r3) [draw,solid,circle,execute at begin node={$r_3$}, right = 2cm of mat3, minimum size = 0.8cm] {};

\foreach \x in {1,...,3} 
	{\draw  (mat1-\x-1) edge (r1); \draw  (mat2-\x-1) edge (r2); \draw  (mat3-\x-1) edge (r3); }

 \matrix[amat,nodes={inner sep=0pt, minimum size = 0.8cm, draw,solid,circle,execute at begin node={$u_{\the\pgfmatrixcurrentrow}$}},row sep=5em, left = 1.5cm of mat2] (mat4) {
 \\
 \\};

 \foreach \x in {1,2}
 	\foreach \k in {1,2,3}
 		\foreach \y in {1,2,3}
 			{\draw (mat4-\x-1) edge (mat\k-\y-1);}

\end{tikzpicture}
}
\caption{The bipartite graph $G=(L\sqcup R,E)$. There are $N$ sets in $L$ with $M$ vertices each. Here $i' = 2$. For each vertex $l_{2,j}$ in the second set, $x_{l_{2, j}} = 1$, while vertices in the other sets each only has $1/M$ probability of being active. The vertices in set $R = \{u_1, \dots, u_{M - 1}\} \cup \{r_1, \dots, r_N\}$ are drawn on two separated sides.}
\label{fig:Transversal_matroid}
\end{subfigure}
\hfill
\begin{subfigure}{0.45\linewidth}
\centering
\scalebox{0.85}{
\begin{tikzpicture}[thick,amat/.style={matrix of nodes,nodes in empty cells}]

 \matrix[amat,nodes={inner sep=0pt, minimum size = 0.8cm, draw,solid,circle,execute at begin node={$l_{1, \the\pgfmatrixcurrentrow}$}},row sep=0.8em,blue,draw,rounded corners] (mat1) {
 \\
 \\ 
 \\};

  \matrix[amat,nodes={inner sep=0pt, minimum size = 0.8cm, draw,solid,circle,execute at begin node={$l_{2, \the\pgfmatrixcurrentrow}$}},row sep=0.8em,blue,draw,rounded corners,below = of mat1, yshift = 2em] (mat2) {
 \\
 \\ 
 \\};

  \matrix[amat,nodes={inner sep=0pt, minimum size = 0.8cm, draw = none},row sep=0.8em,dashed,draw,rounded corners, below = of mat2, yshift = 2em] (mat3) {
 \\
 \\ 
 \\};

 \node (r1) [draw,solid,circle,execute at begin node={$r_1$}, right = 2cm of mat1, minimum size = 0.8cm] {};

 \node (r2) [draw,solid,circle,execute at begin node={$r_2$}, right = 2cm of mat2, minimum size = 0.8cm] {};

 \node (r3) [draw,solid,circle,execute at begin node={$r_3$}, right = 2cm of mat3, minimum size = 0.8cm] {};

 \node (r4) [inner sep=0pt, draw,solid,circle,execute at begin node={$l_{3,2}$}, at = (mat3-2-1), minimum size = 0.8cm] {};

\foreach \x in {1,...,3} 
	{\draw  (mat1-\x-1) edge (r1); \draw  (mat2-\x-1) edge (r2);}

 \matrix[amat,nodes={inner sep=0pt, minimum size = 0.8cm, draw,solid,circle,execute at begin node={$u_{\the\pgfmatrixcurrentrow}$}},row sep=5em, left = 1.5cm of mat2] (mat4) {
 \\
 \\};

 \foreach \x in {1,2}
 	\foreach \k in {1,2,2}
 		\foreach \y in {1,2,3}
 			{\draw (mat4-\x-1) edge (mat\k-\y-1);}

\draw (mat4-1-1) edge (r4);
\draw (mat4-2-1) edge (r4);
\draw (r4) edge (r3);
\end{tikzpicture}
}
\caption{A realization $R(\allocs^{i'})$ of this instance. A set belongs to $U^*$ if and only if all vertices in it are active. They are marked in blue. As long as $N$ is large enough, there will be many sets in $U^*$ with high probability. These sets are indistinguishable to our CRS. Hence the $i'$-th set is hidden inside. }
\label{fig:Transversal_polytope}
\end{subfigure}
\caption{The hard instance for transversal matroids}
\label{fig:Transversal}
\end{figure}

To see that $\allocs^{i'}\in \polytope_{\matroid}$, 
define for each $k\in [M]$ the set $T_{i',k}\coloneqq\{l_{i',j} \mid j\in [M]\} \cup \{l_{i,k} \mid i\in[N]\}$. It is easy to see that $T_{i',k}$ is independent (we can match each $l_{i,k}$ with $r_i$ and use the vertices $u_j$ to match the $M-1$ remaining vertices). Notice that $\allocs^{i'}$ is the average of the indicator vectors of $T_{i',k}$, that is $\allocs^{i'}=\frac{1}{M}\sum_{k\in [M]} \indicator_{T_{i',k}}$, so $\allocs\in \polytope_{\matroid}$.
Again, let $\avgalloc$ be the vector with weight $\frac{1}{M}$ on all vertices from $L$. We can see that $\avgalloc\in \polytope_{\matroid}$ since $\avgalloc\leq \allocs^{i'}$. The rest of proof is almost identical to the previous proof. Instead of $U^*$, we define now the random set of indices: 
$$I^*=\{i\in [N] \mid L_i \subseteq R(\avgalloc)\},$$
and then we compute bounds for the number of accepted elements from $L^*=\bigcup_{i\in I^*}L_i$.
The rank of $L^*$ is $|I^*|+M-1$: for each $i\in I^*$ we can match one vertex from $L_i$ to $r_i$, and then we can match $M-1$ more vertices from $L^*$ using the vertices $u_t$ (it is not possible to match more vertices since the rest of the vertices from $R$ are not connected with $L^*$). Note that the size of $I^*$ also follows a binomial distribution with parameters $N$ and $M^{-M}$, so we get the exact same upper bound we got before. We also get the same lower bound using equivalent arguments and that concludes the proof.


\bibliographystyle{plainnat}
\bibliography{bibs}

\end{document}